\DeclareMathOperator*{\argmin}{arg\,min}
\def\R{\mathbb{R}}
\def\E{\mathbb{E}}
\def\sse{\subseteq}
\newtheorem{theorem}{Theorem}
\newtheorem{lemma}{Lemma}
\newcommand\ignore[1]{}
\title{Online Covering with Sum of $\ell_q$-Norm Objectives}
\author{Viswanath Nagarajan\thanks{Industrial and Operations Engineering Department, University of Michigan.} \and Xiangkun Shen$^*$}
\begin{document}
\maketitle

\begin{abstract}
We consider fractional online covering problems with $\ell_q$-norm objectives. The problem of interest is of the form $\min\{ f(x) \,:\, Ax\ge 1, x\ge 0\}$ where $f(x)=\sum_{e} c_e \|x(S_e)\|_{q_e} $ is the weighted sum of $\ell_q$-norms and $A$ is a non-negative matrix. The rows of $A$ (i.e. covering constraints) arrive online over time. We provide an online $O(\log d+\log \rho)$-competitive algorithm where $\rho = \frac{\max a_{ij}}{\min a_{ij}}$ and $d$ is the maximum of the row sparsity of $A$ and $\max |S_e|$. This is based on the online primal-dual framework where we use the dual of the above convex program. Our result expands the class of convex objectives that admit good online algorithms: prior results required a monotonicity condition on the objective $f$ which is not satisfied here. This result is nearly tight even for the linear special case. As direct applications we obtain (i) improved  online algorithms for non-uniform  buy-at-bulk network design  and (ii) the first online algorithm for throughput maximization under $\ell_p$-norm edge capacities. 

\end{abstract}

\section{Introduction}
The online primal-dual approach is a widely used approach for online problems. This involves solving a discrete optimization problem online as follows (i) formulate a linear programming relaxation and obtain a primal-dual online algorithm for it; (ii) obtain an online rounding algorithm for the resulting fractional solution. While this is similar to a linear programming (LP) based approach for offline optimization problems, a key difference is that solving the LP relaxation in the online setting is highly non-trivial. (Recall that there are general polynomial time algorithms for solving LPs offline.) So there has been a lot of effort in obtaining good online algorithms for various classes of LPs: see \cite{AAABN06,BN09,GN14} for pure covering LPs, \cite{BN09} for pure packing LPs and \cite{ABFP13} for certain mixed packing/covering LPs. Such online LP solvers have been useful in obtaining online algorithms for various problems, eg. set cover~\cite{AAABN03}, facility location~\cite{AAABN06}, machine scheduling~\cite{ABFP13}, caching~\cite{BBN12} and buy-at-bulk network design~\cite{ECKP15}.

Recently, \cite{ABCCCG0KNNP16} initiated a systematic study of online fractional covering and packing with {\em convex} objectives; see  also the full versions~\cite{ACP14,BCGNN14,CHK15}. These papers obtained good online algorithms for a large class of fractional convex covering problems. They also  demonstrated the utility of this approach via many applications that could not be solved using just online LPs. However these  results were limited to convex objectives $f: \R^n_+\rightarrow \R_+$ satisfying a monotone gradient property, i.e. $\nabla f(z) \ge \nabla f(y)$ pointwise  for all $z,y\in \R^n$ with $z\ge y$. There are however many natural convex functions that do not satisfy such a gradient monotonicity condition.  Note that this condition requires the Hessian $\nabla^2 f(x)$ to be pointwise non-negative in addition to convexity which only requires $\nabla^2 f(x)$ to be positive semidefinite.

In this paper, we focus on convex functions $f$ that are  sums of different $\ell_q$-norms. This is a canonical class of convex functions with non-monotone gradients and prior results are not applicable; see Section~\ref{sec:ORT} for a more detailed comparison. We show that sum of $\ell_q$-norm functions admit a logarithmic competitive online algorithm. This result is nearly tight because there is a logarithmic lower bound even for online covering LPs (which corresponds to an $\ell_1$ norm objective). We also provide two applications of our result  (i) improved competitive ratios (by two logarithmic factors) for some online non-uniform buy-at-bulk problems studied in~\cite{ECKP15}, and (ii) the first online algorithm for throughput maximization  with $\ell_p$-norm edge capacities (the competitive ratio is logarithmic which is known to be  best possible even in the special case of individual edge capacities). 

Given that we achieve log-competitive online algorithms for sums of $\ell_q$-norms, a natural question is whether such a result holds for all norms. Recall that any norm is a convex function. It turns out that a log-competitive algorithm is not possible for general norms. This follows from a result in \cite{ACP14} which shows an $\Omega (q\log d)$ lower bound for minimizing the objective $\|Bx\|_q$ under covering constraints (where $B$ is a non-negative matrix). It is still  an interesting open question to identify the correct competitive ratio for general norm functions. 

\subsection{Our Results and Techniques}\label{sec:ORT}

We consider the online covering problem 
\begin{equation}\label{eq:covering-prob}
\min \, \left\{ \sum_{e=1}^r c_e\|x(S_e)\|_{q_e}\,:\,  Ax\ge \mathbf{1},\, {x\in \R_+^n}\right\},
\end{equation}
where each $S_e\sse [n]:=\{1,2,\cdots n\}$, $q_e\ge 1$, $c_e\ge 0$ and $A$ is a non-negative $m\times n$ matrix.  For any $S\sse [n]$ and $q \ge 1$ we use the standard notation 
$\|x(S)\|_q = \left( \sum_{i\in S} x_i^q\right)^{1/q}$. We also consider the dual of this convex program, which is the following packing problem:
\begin{equation}\label{eq:packing-prob}
\max\, \left\{  \sum_{k=1}^m y_k\, :\,  A^Ty=\mu,\, \sum_{e=1}^r \mu_e = \mu,\,  \|\mu_e(S_e)\|_{p_e}\le c_e \, \forall e\in [r],\,  y\ge 0\right\}. 
\end{equation}
The values $p_e$ above 
satisfy $\frac{1}{p_e}+\frac1{q_e}=1$; so  $\|\cdot\|_{p_e}$ is the dual norm of $\|\cdot\|_{q_e}$.  This dual can be derived from~\eqref{eq:covering-prob} using Lagrangian duality (see Section~\ref{sec:prelim}).

Our framework captures the classic setting of packing/covering LPs when $r=n$ and for each $e\in [n]$ we have $S_e=\{e\}$ and $q_e=1$. Our main result is:
\begin{theorem}\label{thm:main}
There is an $O(\log d + \log \rho)$-competitive online algorithm for \eqref{eq:covering-prob} and \eqref{eq:packing-prob} where the covering constraints in \eqref{eq:covering-prob}  and variables $y$ in \eqref{eq:packing-prob}  arrive over time. Here $d$ is the maximum of  the row-sparsity of $A$ and $\max_{e=1}^r |S_e|$ and $\rho = \frac{\max\{a_{ij}\}}{\min \{a_{ij}\}}$.
\end{theorem}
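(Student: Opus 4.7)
The plan is to use the online primal-dual framework, constructing feasible primal $x$ and (approximately feasible) dual $(y,\mu)$ simultaneously in a way that satisfies the KKT stationarity condition of \eqref{eq:covering-prob}. Concretely, we will maintain the identification
\[
\mu_{e,j} \;=\; c_e\,\frac{x_j^{q_e-1}}{\|x(S_e)\|_{q_e}^{q_e-1}} \qquad \text{for every } j\in S_e,
\]
which, by the $\ell_{q_e}/\ell_{p_e}$ Hölder equality, automatically yields $\|\mu_e(S_e)\|_{p_e}=c_e$ (so the norm constraint on $\mu_e$ in \eqref{eq:packing-prob} is tight) and the pointwise identity $f(x) = \sum_{j} x_j \sum_e \mu_{e,j}$. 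The role of $\sum_e \mu_{e,j}$ is thus the ``effective cost'' of $x_j$, playing the part that $c_j$ plays in the Buchbinder-Naor covering-LP algorithm.

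Next I will specify the continuous update rule. Initialize $x=0$, $y=0$. When a new constraint $\sum_j a_{kj} x_j \ge 1$ arrives and is violated, raise $y_k$ continuously at unit rate, and for each $j$ with $a_{kj}>0$ update $x_j$ by the differential equation
\[
\frac{dx_j}{dy_k} \;=\; \frac{a_{kj}\,\bigl(x_j + \delta_j\bigr)}{\sum_{e:j\in S_e} \mu_{e,j} + \eta_j},
\]
where $\mu_{e,j}$ is defined from $x$ as above, and $\delta_j,\eta_j$ are BN-style ``kick-start'' terms of order $1/(d\rho)$ (scaled appropriately by $a_{kj}$ and the local objective scale) to move variables off $0$. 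Updating continues until the arriving constraint is satisfied. This rule is the natural generalization of the covering-LP rule: it is chosen precisely so that the stationarity invariant $\sum_e \mu_{e,j} \approx \sum_{k'} a_{k'j}\,y_{k'}$ is maintained up to the additive kick-start contribution.

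The analysis has three pieces. \emph{(i) Primal feasibility} holds by construction since each constraint is processed until it becomes tight. \emph{(ii) Dual feasibility (after scaling)}: the norm constraints $\|\mu_e(S_e)\|_{p_e}\le c_e$ hold by the Hölder identity above, and the equality $A^\top y = \sum_e \mu_e$ holds up to a multiplicative factor $\alpha=O(\log d+\log\rho)$ owing to the kick-start terms; dividing $y$ and $\mu$ by $\alpha$ produces a feasible dual. \emph{(iii) Primal-dual ratio}: using $f(x) = \sum_j x_j \sum_e \mu_{e,j}$ and the approximate stationarity, I compute
\[
\frac{df}{dy_k} \;=\; \sum_j \bigl(\text{eff cost of }x_j\bigr)\cdot \frac{dx_j}{dy_k} \;\le\; \sum_j a_{kj}(x_j+\delta_j) \;\le\; 1 + \sum_j a_{kj}\delta_j,
\]
where the first inequality uses the update rule and the second uses that during the update $\sum_j a_{kj}x_j \le 1$. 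A careful choice of $\delta_j$ (scaled by $a_{kj}$, the sparsity $d$, and the ratio $\rho$) yields $f(x)\le \alpha\cdot \sum_k y_k$ with $\alpha=O(\log d + \log \rho)$. Combined with weak duality this establishes both the primal and dual competitive ratios.

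The main obstacle I anticipate is handling the non-monotone gradient: because $\mu_{e,j}$ depends negatively on $x_i$ for other $i\in S_e$, the effective cost of $x_j$ can \emph{decrease} as we process the algorithm, which breaks the standard monotone-gradient analyses of \cite{ABCCCG0KNNP16,ACP14,BCGNN14,CHK15}. Overcoming this requires the exact identity $f(x)=\langle x,\sum_e\mu_e\rangle$ (which holds because of the conjugate-norm duality, even though the gradient is not monotone) to convert the running objective into a sum that can be amortized against $\sum_k y_k$, together with a potential-function bound on the accumulated kick-start cost. Getting the logarithmic factor to be $\log d + \log \rho$ (rather than, say, a factor depending on $\max_e |S_e|$ separately) hinges on calibrating $\delta_j$ based on $a_{kj}$ relative to $\max a_{kj}$ along the arriving row, as in the covering-LP tight analysis.
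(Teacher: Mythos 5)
There is a genuine gap in step~(ii) of your plan, and it is precisely the obstacle that the paper is structured around. You define $\mu_{e,j}=\nabla_j f_e(\bar x)$ at the final primal $\bar x$ (so that $\|\mu_e(S_e)\|_{p_e}=c_e$ holds exactly) and then claim that the conservation constraint $A^\top y=\sum_e\mu_e$ holds up to a multiplicative $O(\log d+\log\rho)$ factor. Translated, this is exactly the pointwise bound $A^\top y\le \alpha\,\nabla f(\bar x)$ with $\alpha$ logarithmic, and that bound is false for this class of objectives. The paper gives an explicit obstruction: take $f(x)=\|x\|_2$ with $n=m^2$ variables and $m$ disjoint covering constraints $\sum_{i\in B_k} x_i\ge1$ over blocks $B_k$ of size $m$. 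Any minimal primal-dual update pushes the block-$k$ variables to $1/m$; the primal objective increases by $(\sqrt{k}-\sqrt{k-1})/\sqrt m$ in step $k$, so a primal-dual ratio of $\alpha$ forces $y_1\ge 1/(\alpha\sqrt m)$, whereas $\nabla_1 f(\bar x)=1/m$. Hence even after rescaling by $\alpha$, the pointwise requirement $(A^\top y)_1\le \alpha\cdot\nabla_1 f(\bar x)$ forces $\alpha\ge m^{1/2}=n^{1/4}$. The failure is not repaired by kick-start terms: it is caused by the gradient $\nabla_1 f$ \emph{shrinking} after $y_1$ has already been committed, which is exactly the non-monotonicity issue you flag but do not actually neutralize. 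The identity $f(x)=\langle x,\nabla f(x)\rangle$ (degree-$1$ homogeneity) is fine for your step (iii), but it does nothing to make $A^\top y$ stay below a vector that decreases over time.

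What the paper does differently, and what you would need to change, is to give up the pointwise target entirely and instead set $\mu:=A^\top y$ (so the linear constraint is exact) and prove the \emph{norm} bound $\|\mu(S_e)\|_{p_e}\le O(\log d\rho)\cdot c_e$ directly for each group $e$. This is a strictly weaker requirement than the pointwise bound, and it can absorb the fact that individual coordinates of $\nabla f$ go down while $\|x(S_e)\|_{q_e}$ goes up. Concretely, the analysis partitions time into $O(q_e\log(d\rho)/\log\theta)$ phases during which the potential $\Phi_e=\sum_{i\in S_e}x_i^{q_e}$ lies in a geometric window $[\theta^{j-1}\zeta,\theta^j\zeta)$; within a phase, $\Phi_e$ is bounded below, which lets one integrate the coordinate-wise differential inequality $d\mu_i\lesssim c_e\,x_i^{q_e-2}\,\Phi_e^{-1/p_e}\,dx_i$ and show that the $\ell_{p_e}$-norm of the per-phase increment of $\mu(S_e)$ is $O(c_e)$; the triangle inequality across phases then gives the logarithmic bound, with $\theta$ chosen as a function of $q_e$ to keep the per-phase constant $O(1)$ for all $q_e>1$ (the case $q_e=1$ is handled separately by the usual linear-covering integration). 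Your ``potential-function bound on the accumulated kick-start cost'' is not a substitute for this: the phase argument is about controlling the normalization $\|x(S_e)\|_{q_e}^{q_e-1}$ in the gradient, not about the $1/d$ kick-start. You should also note the paper's preliminary reduction (via variable duplication and an online separation oracle) to disjoint $S_e$, which is what makes $\nabla f$ have the simple product form you write; this is needed for the phase argument to go through cleanly.
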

 We note that this bound is also the best possible, even in the linear case~\cite{BN09}. For just the covering problem, a better $O(\log d)$ bound is known in the linear case~\cite{GN14} as well as for  convex functions with monotone gradients~\cite{ABCCCG0KNNP16}. 
 
The algorithm in Theorem~\ref{thm:main} is the natural extension of the primal-dual approach for online LPs~\cite{BN09}. We use the gradient $\nabla f(x)$ at the current primal solution $x$ as the cost function, and use this to define a multiplicative update for the primal. Simultaneously, the dual solution $y$ is increased additively. This algorithm is in fact identical to the one in~\cite{ABCCCG0KNNP16} for convex functions with monotone gradients. See Algorithm~\ref{alg:primal-dual} for the formal description. The contribution of this paper is in the analysis of this algorithm, which requires new ideas to deal with non-monotone gradients. 
 
\paragraph*{Limitations of  previous approaches~\cite{ABCCCG0KNNP16}}  Recall that the general convex covering problem is 
$$\min \, \left\{ f(x) \,:\,  Ax\ge \mathbf{1},\, {x\in \R_+^n}\right\},$$
where $f:\R^n_+\rightarrow \R_+$ is a convex function. Its dual is: 
$$\max\, \left\{  \sum_{k=1}^m y_k - f^*(\mu) \, :\,  A^Ty=\mu,\,  y\ge 0\right\}, $$
where $f^*(\mu) = \max_{x\in \R^n_+} \{ \mu^Tx - f(x)\}$ is the Fenchel conjugate of $f$.  When $f$ is the sum of $\ell_q$-norms, these primal-dual convex programs reduce to \eqref{eq:covering-prob} and \eqref{eq:packing-prob}. 

We restrict the  discussion of prior techniques to functions $f$ with $\max_{x\in \R^n_+} \frac{x^T \nabla f(x)}{f(x)}\le 1$ because this condition is satisfied by sums of $\ell_q$ norms.\footnote{The result in \cite{ABCCCG0KNNP16} also applies to other convex functions with monotone gradients, but the competitive ratio depends exponentially on $\max_{x\in \R^n_+} \frac{x^T \nabla f(x)}{f(x)}$.}  
At a high level, the analysis in~\cite{ABCCCG0KNNP16} uses the gradient monotonicity to prove a {\em pointwise upper bound} $A^Ty\le \nabla f(\bar{x})$ where $\bar{x}$ is the final primal solution. This allows them to lower bound the dual objective by $\sum_{k=1}^m y_k$ because  $f^*(\nabla f(\bar{x}))\le 0$ for any $\bar{x}$ (see Lemma 4(d) in~\cite{ABCCCG0KNNP16}). Moreover, proving the pointwise upper bound $A^Ty\le \nabla f(\bar{x})$ is similar to the task of showing dual feasibility in the {\em linear} case~\cite{BN09,GN14} where $\nabla f(\bar{x})$ corresponds to the (fixed) primal cost coefficients.

Below we give a simple example with an $\ell_q$-norm objective where the pointwise upper bound $A^Ty\le \nabla f(\bar{x})$ is  not satisfied by the online primal-dual algorithm unless the dual solution $y$ is scaled down by a large (i.e. polynomial) factor. This means that one cannot obtain a sub-polynomial competitive ratio for \eqref{eq:covering-prob} using this approach directly.  

 Consider an instance with objective function $f(x)=\|x\|_2=\sqrt{\sum_{i=1}^n x_i^2}$.  There are $m=\sqrt{n}$ covering constraints, where the  $k^{th}$ constraint is $\sum_{i=k(m-1)+1}^{km} x_i \ge 1$. Note that each variable appears in only one constraint. Let $P$ be the value of primal objective and $D$ be the value of dual objective at any time. Suppose that the rate of increase of the primal objective is at most $\alpha$ times that of the dual; $\alpha$ corresponds to the competitive ratio in the online primal-dual algorithm. 
Upon arrival of any constraint $k$, it follows from the primal updates that all the variables $\{x_i\}_{i=k(m-1)+1}^{km}$ increase from $0$ to $\frac{1}{m}$. So the  increase in $P$ due to constraint $k$ is $(\sqrt{k}-\sqrt{k-1})\frac{1}{\sqrt{m}}$ for iteration $k$. This means that the increase in $D$ is at least $\frac{1}{\alpha}(\sqrt{k}-\sqrt{k-1})\frac{1}{\sqrt{m}}$, and so  $y_k\ge \frac{1}{\alpha}(\sqrt{k}-\sqrt{k-1})\frac{1}{\sqrt{m}}$. Finally, since $\bar{x}=\frac{1}{m}\mathbf{1}$, we know that $\nabla f(\bar{x})=\frac{1}{m}\mathbf{1}$ (recall $n=m^2$). On the other hand, $(A^Ty)_1 = y_1\ge \frac{1}{\alpha\sqrt{m}}$. Therefore, in order to guarantee $A^Ty\le \nabla f(\bar{x})$ we must have $\alpha\ge \sqrt{m} = n^{1/4}$.

\paragraph*{Our approach} First, we show that by duplicating variables and using an online separation oracle approach (as in~\cite{AAABN06}) one can ensure that the sets $\{S_e \}_{e=1}^r$ are disjoint. This allows for a simple expression for $\nabla f$ which is useful in the later analysis. Then we utilize the specific form of the primal-dual convex programs \eqref{eq:covering-prob} and \eqref{eq:packing-prob} and an explicit expression for $\nabla f$ to show that the dual $y$ is  approximately  feasible. In particular we show that $\|y^T A(S_e)\|_{p_e} \le O(\log d\rho)\cdot c_e$ for each $e\in [r]$; here $A(S_e)$ denotes the submatrix of $A$ with columns from $S_e$. Note that this is a weaker requirement than upper bounding $A^Ty$ pointwise by $\nabla f(\bar{x})$. 
 
 In order to  bound $\|y^T A(S_e)\|_{p_e}$, we analyze  each $e\in [r]$ separately. We partition the steps of the algorithm into {\em phases} where phase $j$ corresponds to steps where $\Phi_e = \sum_{i\in S_e} x_i^{q_e} \approx \theta^j$; here $\theta>1$ is a parameter that depends on $q_e$. The number of phases can be bounded using the fact that $\Phi_e$ is monotonically increasing. By triangle inequality we  upper bound $\|y^T A(S_e)\|_{p_e}$ by 
 $\sum_j \|y_{(j)}^T A(S_e)\|_{p_e}$ where $y_{(j)}$ denotes the dual variables that arrive in phase $j$. And in each phase $j$,  we  can upper bound  $\|y_{(j)}^T A(S_e)\|_{p_e}$ using the differential equations for the primal and dual updates.  
 
\paragraph*{Applications} We also provide two applications of Theorem~\ref{thm:main}. 

{\bf Non-uniform multicommodity buy-at-bulk.} This is a well-studied network design problem in the offline setting~\cite{CK05,CHKS10}. For its online version, the first poly-logarithmic competitive ratio was obtained recently in~\cite{ECKP15}. A key step in this result was a fractional online algorithm for a certain mixed packing-covering LP. We improve the competitive ratio of this step from $O(\log^3n)$ to $O(\log n)$ which leads to a corresponding improvement in the final result of \cite{ECKP15}. See Theorem~\ref{thm:bab}. 

{\bf Throughput maximization with $\ell_p$-norm capacities.} The online problem of maximizing throughput subject to edge capacities is well studied and a tight logarithmic competitive ratio is known~\cite{AAP93,BN09}.  We consider the generalization where instead of individual edge capacities, we can have capacity constraints on subsets as follows. A $\ell_p$-norm capacity of $c$ for some subset $S$ of edges means that the $\ell_{p}$-norm of the loads on edges of $S$ must be at most $c$. We show  that one can obtain a randomized log-competitive algorithm even in this setting, which generalizes the case with edge-capacities. See Theorem~\ref{thm:routing}.

\subsection{Related Work}
The online primal-dual framework for linear programs~\cite{BN-mono} is fairly well understood. Tight results are known for 
the class of packing and covering LPs~\cite{BN09,GN14},  with competitive ratio $O(\log d)$ for covering LPs and $O(\log d\rho)$ for packing LPs; here $d$ is the row-sparsity and $\rho$ is the ratio of the maximum to minimum entries in  the constraint matrix. Such LPs are very useful because they correspond to the LP relaxations of many combinatorial optimization problems. Combining the online LP solver with suitable online rounding schemes, good online algorithms have been obtained for many problems,  eg. set cover~\cite{AAABN03}, group Steiner tree~\cite{AAABN06}, caching~\cite{BBN12} and ad-auctions~\cite{BJN07}. Online algorithms for LPs with mixed packing and covering constraints were obtained in \cite{ABFP13}; the competitive ratio was improved in~\cite{ABCCCG0KNNP16}. Such mixed packing/covering LPs were also used to obtain an online algorithm for capacitated facility location~\cite{ABFP13}. A more complex mixed packing/covering LP was used recently in~\cite{ECKP15} to obtain online algorithms for non-uniform buy-at-bulk network design: as an application of our result, we obtain a simpler and better (by two log-factors) online algorithm for this problem. 

There have also been a number of results utilizing the online primal-dual framework with {\em convex} objectives for specific problems, eg. matching~\cite{DJ12}, caching~\cite{MS15}, energy-efficient scheduling~\cite{DH14,GKP11} and welfare maximization~\cite{BGMS11,HK15}. All of these results involve separable convex/concave functions. Recently, \cite{ABCCCG0KNNP16} considered packing/covering problems with general (non-separable) convex objectives,  but (as discussed previously) this result requires a monotone gradient assumption on the convex function. The sum of $\ell_q$-norm objectives considered in this paper  does not satisfy this condition. While our primal-dual algorithm is identical to~\cite{ABCCCG0KNNP16}, we need new techniques in the analysis.

All the results above (as well as ours) involve convex objectives and linear constraints. We note that~\cite{EKN16} obtained online primal-dual algorithms for certain semidefinite programs (i.e. involving non-linear constraints). While both our result and \cite{EKN16} generalize packing/covering LPs, they are not directly comparable.  

We also note that online algorithms with $\ell_q$-norm objectives have been studied previously for many scheduling problems, eg.~\cite{AAGKKV95,BP10}. These results use different approaches and are not directly comparable to ours. More recently~\cite{ACP14} used ideas from the online primal-dual approach in an online algorithm for unrelated machine scheduling with $\ell_p$-norm objectives as well as startup costs. However, the algorithm in~\cite{ACP14}   was tailored to their scheduling setting and we do not currently see a connection between our result and~\cite{ACP14}.

\section{Preliminaries}\label{sec:prelim}
Recall the primal covering problem~\eqref{eq:covering-prob} and its dual packing problem~\eqref{eq:packing-prob}. In the online setting, the constraints in the primal  and variables in the dual arrive over time. We need to maintain  monotonically increasing primal ($x$) and dual ($y$) solutions.    We first derive the dual program~\eqref{eq:packing-prob} from~\eqref{eq:covering-prob}  using Lagrangian duality, and then  show that one can assume that the sets $\{S_e\}_{e=1}^r$ are {\em disjoint} without loss of generality. The disjointness assumption leads to a much simpler expression for $\nabla f$ that will be used in Section~\ref{sec:algo}.  

\paragraph{Deriving the dual problem.} 
Let $f_e(x)=c_e\|x(S_e)\|_{q_e}$ and $f(x)=\sum_{e=1}^r f_e(x)$.  
The Lagrangian dual of problem~\eqref{eq:covering-prob} is given by:
\begin{align}
&\sup_{y\ge 0}\inf_{x\ge 0} \,\sum_{e=1}^r c_e\|x(S_e)\|_{q_e}+y^T(\mathbf{1}-Ax)\\
=&\sup_{y\ge 0}\sum_{k=1}^my_k-\sup_{x\ge 0} \,\left((A^Ty)^Tx-\sum_{e=1}^r c_e\|x(S_e)\|_{q_e}\right)\\
=&\sup_{y\ge 0}\sum_{k=1}^my_k-f^*(A^Ty)\label{apx:eq_L}
\end{align}
where $f^*(\cdot)$ is the conjugate function of $f(\cdot)$. Let $\mu=A^Ty$. $\mu\ge 0$ since $y\ge0$ and $A$ is a nonnegative matrix.  From~\cite{bertsekas2009convex}, since $f(x)$ is closed, proper, continuous and convex, we have
\begin{align*}
f^*(\mu)=f_1^*(\mu)\oplus\cdots\oplus f_r^*(\mu)=\inf_{\mu_1+\cdots+\mu_r=\mu}\left\{\sum_{e=1}^rf_e^*(\mu_e)\right\},
\end{align*}
where $\oplus$ is the infimal convolution.

Since $f_e(x)=c_e\|x(S_e)\|_{q_e}$, $f^*_e(\mu_e)=\sup_{x\ge 0} \,\mu_e(S_e)^Tx(S_e)-c_e\|x(S_e)\|_{q_e}$. Let $\|\cdot\|_{p_e}$ be the dual norm of $\|\cdot\|_{q_e}$. By the definition of dual norm, if $\|\mu_e(S_e)\|_{p_e}>c_e$, there exists $z\in\R^{|S_e|}$ with $\|z\|_{q_e}\le 1$ such that $\mu_e(S_e)^Tz>c_e$. Since $\mu_e\ge 0$, we can further require $z\ge 0$. Then take $x(S_e)=tz$ and $t\rightarrow \infty$, we have
$$\mu_e(S_e)^Tx(S_e)-c_e\|x(S_e)\|_{q_e}=t(\mu_e(S_e)^Tz-c_e\|z\|_{q_e})\rightarrow \infty.$$
On the other hand, by H{\"o}lder's inequality, $\mu_e(S_e)^Tx(S_e)\le \|\mu_e(S_e) \|_{p_e}\|x\|_{q_e}$. If $\|\mu_e(S_e) \|_{p_e}\le c_e$, for all $x$, we have $\mu_e(S_e)^Tx(S_e)- \|\mu_e(S_e) \|_{p_e}\|x\|_{q_e}\le0$. Therefore $x(S_e)=0$ is the maximizer with objective function value 0. Therefore, we have
$$f^*_e(\mu_e)=\begin{cases}
0,\quad\mbox{if } \|\mu_e(S_e)\|_{p_e}\le c_e\\
\infty,\quad\mbox{otherwise}
\end{cases}$$
Problem~\eqref{apx:eq_L} can then be reformulated as
\begin{equation*}
\max\, \left\{  \sum_{k=1}^m y_k\, :\,  A^Ty=\mu,\, \sum_{e=1}^r \mu_e = \mu,\,  \|\mu_e(S_e)\|_{p_e}\le c_e \, \forall e\in [r],\,  y\ge 0\right\}. 
\end{equation*}
which is exactly the packing problem~\eqref{eq:packing-prob}.

\paragraph{Dual problem for disjoint $S_e$.} When the sets $S_e$ are disjoint, the constraints $\sum_{e=1}^r \mu_e = \mu,\,  \|\mu_e(S_e)\|_{p_e}\le c_e$ are equivalent to $\|\mu(S_e)\|_{p_e}\le c_e$. Then the packing problem is of the form
\begin{equation}\label{eq:packing-disj}
\max\, \left\{  \sum_{k=1}^m y_k\, :\,  A^Ty=\mu,\,   \|\mu(S_e)\|_{p_e}\le c_e \, \forall e\in [r],\,  y\ge 0\right\}. 
\end{equation}
This is the dual program~\eqref{eq:packing-disj} used in Section~\ref{sec:algo}.

\paragraph{Reducing \eqref{eq:covering-prob} to the case of disjoint $S_e$.} Here, we show that any online algorithm for \eqref{eq:covering-prob} with disjoint sets $S_e$ can be converted into an online algorithm for the general case. 

\def\A{\ensuremath{{\cal A}}\xspace}
\def\ox{\bar{x}}
 \begin{lemma}\label{lem:disjoint}
	If there is a poly-time   $\alpha$-competitive algorithm for instances with disjoint $S_e$, then there is a poly-time    $O(\alpha)$-competitive algorithm for general instances.
\end{lemma}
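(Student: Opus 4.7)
My plan is to reduce an arbitrary instance to one with disjoint $S_e$ by duplicating each index $i$ once per containing set, and then to drive the hypothesized disjoint-case algorithm $\A$ through an online separation oracle. For each $i\in[n]$ let $T_i=\{e:i\in S_e\}$, introduce a new variable $z_{i,e}$ for every pair $(i,e)$ with $e\in T_i$, and declare $\hat S_e:=\{(i,e):i\in S_e\}$. The $\hat S_e$ are pairwise disjoint by construction; the new objective is $g(z):=\sum_e c_e\|z(\hat S_e)\|_{q_e}$, and the new covering family consists, for each original row $k$ and every selector $e(\cdot)$ with $e(i)\in T_i$, of the inequality $\sum_i a_{ki}z_{i,e(i)}\ge 1$. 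Although this family is exponentially large, each such row carries exactly the nonzeros and values of the original row of $A$, and $|\hat S_e|=|S_e|$, so the disjoint instance's $d$ and $\rho$ are unchanged; any competitive ratio $\alpha$ depending only on these parameters transfers verbatim.

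I would feed these constraints to $\A$ through the implicit projection $x_i:=\min_{e\in T_i}z_{i,e}$. When original constraint $k$ arrives, while $\sum_i a_{ki}x_i<1$, choose $e^\star(i)\in\arg\min_{e\in T_i}z_{i,e}$, present the disjoint constraint $\sum_i a_{ki}z_{i,e^\star(i)}\ge 1$ to $\A$, and update $z$ accordingly. When the loop exits, $\sum_i a_{ki}\min_e z_{i,e}\ge 1$, so the original constraint is satisfied, and monotonicity of $x$ is inherited from that of $z$. For the cost comparison, given any feasible $x^\star$ for the original problem, the lift $z^\star_{i,e}:=x^\star_i$ satisfies every presented disjoint constraint (since $\min_e z^\star_{i,e}=x^\star_i$) and has $g(z^\star)=f(x^\star)$, so the disjoint optimum is at most $\opt$; conversely, $x_i\le z_{i,e}$ for every $e\in T_i$ gives $\|x(S_e)\|_{q_e}\le\|z(\hat S_e)\|_{q_e}$ coordinatewise and hence $f(x)\le g(z)\le\alpha\cdot\opt$.

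The main obstacle I anticipate is the polynomial-time claim, since the argmin pattern $e^\star$ may shift after each call to $\A$ and the inner loop could fire many times per original constraint. I would control this by noting that in every iteration the fed disjoint constraint is genuinely violated at the current $z$ --- by construction $\sum_i a_{ki}z_{i,e^\star(i)}=\sum_i a_{ki}x_i<1$ --- so $\A$ is forced to strictly increase at least one coordinate of $z$. A standard potential argument capping the total number of coordinate-updates produced by $\A$ (multiplicative-update online covering algorithms of this type admit such a bound, polynomial in $n$, $m$, and $\log\rho$) then yields an overall polynomial-time reduction.
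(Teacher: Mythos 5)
Your reduction is essentially the paper's: duplicate each variable once per containing set to make the $\hat S_e$ disjoint, project back via a coordinatewise minimum, and drive the disjoint-case algorithm $\A$ through a separation oracle that repeatedly presents the argmin-selector row. The cost comparison is also the same in spirit. The place where your proposal genuinely diverges --- and where it breaks down --- is the polynomial-time argument.

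You run the inner loop with stopping condition $\sum_i a_{ki}\min_e z_{i,e}<1$ and argue only that each fed constraint is strictly violated, so $\A$ ``strictly increases at least one coordinate.'' Strict increase is not a polynomial bound: the per-iteration progress can be made arbitrarily small, since after $\A$ makes the selected row tight, the new argmin pattern may yield a row value only infinitesimally below $1$. Your fallback is to appeal to a bound on ``coordinate-updates'' for multiplicative-update covering algorithms, but the lemma is stated for an arbitrary poly-time black-box $\A$, and even for Algorithm~2 (which updates continuously) the relevant quantity is the number of oracle calls, which your argument does not control. The paper fixes exactly this by (i) assuming, w.l.o.g., that $\A$ is \emph{minimal}, i.e., it stops each iteration as soon as $\sum_i a_{ki}x_i^{(e_i)}=1$, and (ii) running the loop only while $\sum_i a_{ki}\min_e x_i^{(e)}<\tfrac12$ and then outputting $\bar x_i = 2\min_e x_i^{(e)}$. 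Minimality gives both a uniform upper bound $a_{ki}x_i^{(e)}\le 1$ (so the potential $\psi=\sum_{i,e}a_{ki}x_i^{(e)}\le rn$) and a guaranteed jump: each iteration takes the selected row from $<\tfrac12$ to exactly $1$, so $\psi$ increases by at least $\tfrac12$, yielding at most $2rn$ iterations. Your version trades the factor-$2$ scaling for the cleaner threshold $<1$, but in doing so loses the constant per-iteration progress that makes the termination argument work; to make your proof go through you would need to adopt the $\tfrac12$ threshold (or another strictly-sub-$1$ threshold), the minimality normalization, and the final scaling step.
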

\begin{proof}
Let \A denote an $\alpha$-competitive algorithm for disjoint $S_e$.	We assume that it is  a minimal algorithm, that is when constraint $k$ arrives it stops  increasing $x$  when $\sum_{i=1}^na_{ki}x_i=1$. (Any online algorithm can be ensured to be of this form.)

Given an instance $I$ with general $\{S_e\}_{e=1}^r$, we define an instance $J$ with disjoint $S'_e$ as follows. For each variable $x_i$, we introduce $r$ copies $x_i^{(1)},\dots,x_i^{(r)}$  where $x_i^{(e)}$ corresponds to the occurrence (if any) of $x_i$ in $S_e$. Set $S'_e$ to consist of the variables $x_i^{(e)}$ for $i\in S_e$; so $\{S'_e\}_{e=1}^r$ are disjoint. For each constraint $a^T_k x\ge 1$ in instance $I$, we introduce a family of constraints in instance $J$ which corresponds to all combinations of the $x_i^{(e)}$ variables.  
$$\sum_{i=1}^n a_{ki}\cdot x_i^{(e_i)}\ge 1,\quad \forall e_i \in [r],\, \, \forall i\in[n].$$  
Note that instances $I$ and $J$ share the same optimal offline  value by  design. Moreover, if $x$ is a feasible solution for $J$ then $\left\{\min_{e=1}^r x_i^{(e)}\right\}_{i=1}^n$ is a feasible solution for $I$. So an $\alpha$-competitive algorithm for $I$ also leads to one for $J$. However, this is not a poly-time reduction as there are exponentially many constraints in $J$. In order to deal with this, we use a separation oracle based algorithm, as in~\cite{AAABN06}.

	  \begin{algorithm}[H]
	  	\LinesNumbered
	  	When the $k^{th}$ covering constraint $\sum_{i=1}^na_{ki}x_i\ge1$ arrives in $I$\\
	  	\While {$\sum_{i=1}^n a_{ki}\cdot \min_{e=1}^r x_i^{(e)} <\frac12$}{
	  		let $e_i=\argmin_{e=1}^e x_i^{(e)}$ for all $i\in[n]$\;
	  		add constraint $\sum_{i=1}^n a_{ki}\cdot x_i^{(e_i)}\ge1$ to instance $J$ and run  algorithm \A\;
	  	}
		Output current solution $\ox_i = 2\cdot \min_{e=1}^r x_i^{(e)}$ for all $i\in[n]$.	  	
	  	\caption{Separation Oracle Based Algorithm for General $S_e$}
	  	\label{alg:sep}
	  \end{algorithm}
	  
	It is obvious that the output solution is feasible for instance $I$. As $x$ is an $\alpha$-competitive solution to $J$, the output solution is $2\alpha$-competitive for $I$. It remains  to show that Algorithm~\ref{alg:sep} runs in polynomial time upon arrival of any constraint $k$. For this, define  potential function $\psi=\sum_{i=1}^n \sum_{e=1}^r a_{ki}\cdot x_i^{(e)}$ which is monotone non-decreasing. We know that $\max_{i,e} a_{ki}x_i^{(e)}\le 1$ since  algorithm \A is  minimal. So $\psi\le rn$. In each iteration of Algorithm~\ref{alg:sep}, $\sum_{i=1}^na_{ki}\cdot x_i^{(e_i)}$ increases by at least $\frac{1}{2}$, i.e. $\psi$  also increases by at least $\frac12$. So the  number of iterations is bounded by $2rn$ which is polynomial.  
\end{proof}

Henceforth we will assume that the sets $\{S_e\}_{e=1}^r$ are  disjoint. Recall that the dual program~\eqref{eq:packing-prob} in this case reduces to~\eqref{eq:packing-disj}. 
It is easy to see that weak duality holds (Lemma~\ref{lem:weak-duality}). Strong duality also holds because~\eqref{eq:covering-prob} satisfies Slater's condition; however we do not use this fact. 

\begin{lemma}\label{lem:weak-duality}
For any pair of feasible solutions $x$ to \eqref{eq:covering-prob} and $(y,\mu)$ to \eqref{eq:packing-disj}, we have $$\sum_{e=1}^r c_e\|x(S_e)\|_{q_e} \ge \sum_{k=1}^m y_k.$$
\end{lemma}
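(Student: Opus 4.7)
The plan is a short chain of inequalities: use primal feasibility to introduce $x$ into the dual objective, invert the $A^T y = \mu$ equality, split along the disjoint sets $S_e$, apply Hölder's inequality per-$e$, and finish with the dual norm bound $\|\mu(S_e)\|_{p_e}\le c_e$.

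Concretely, I would argue:
\begin{align*}
\sum_{k=1}^m y_k \;\le\; \sum_{k=1}^m y_k (Ax)_k \;=\; y^T(Ax) \;=\; (A^Ty)^T x \;=\; \mu^T x,
\end{align*}
where the first inequality uses $y\ge 0$ together with primal feasibility $Ax\ge \mathbf 1$, and the last equality uses the dual constraint $A^Ty=\mu$. Next, because the sets $\{S_e\}_{e=1}^r$ are disjoint (and, implicitly from the derivation of~\eqref{eq:packing-disj}, $\mu_i=0$ for any $i\notin \bigcup_e S_e$, since otherwise $f^*(\mu)=+\infty$), we can decompose $\mu^T x=\sum_{e=1}^r \mu(S_e)^T x(S_e)$. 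Applying Hölder's inequality to each $e$ separately, with the conjugate exponents $1/p_e+1/q_e=1$, gives $\mu(S_e)^T x(S_e)\le \|\mu(S_e)\|_{p_e}\,\|x(S_e)\|_{q_e}$. Finally, the dual feasibility constraint $\|\mu(S_e)\|_{p_e}\le c_e$, together with $\|x(S_e)\|_{q_e}\ge 0$, yields
\begin{align*}
\mu^T x \;=\; \sum_{e=1}^r \mu(S_e)^T x(S_e) \;\le\; \sum_{e=1}^r \|\mu(S_e)\|_{p_e}\,\|x(S_e)\|_{q_e} \;\le\; \sum_{e=1}^r c_e\,\|x(S_e)\|_{q_e},
\end{align*}
which combined with the earlier inequality is the desired bound.

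There is no serious obstacle here; the only point worth emphasizing is that disjointness of the $S_e$ is exactly what is needed to make the decomposition $\mu^T x=\sum_e \mu(S_e)^T x(S_e)$ valid without double counting, so that Hölder can be applied term-by-term against the corresponding norm constraint in the dual. Strong duality would follow from Slater's condition but is not needed for this lemma.
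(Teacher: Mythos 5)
Your argument is correct and matches the paper's proof step for step: primal feasibility to bound $\mathbf{1}^Ty$ by $y^TAx$, the identity $A^Ty=\mu$, the decomposition over disjoint $S_e$, Hölder with conjugate exponents $p_e,q_e$, and finally the dual norm constraints $\|\mu(S_e)\|_{p_e}\le c_e$. Your remark that $\mu_i=0$ off $\bigcup_e S_e$ is a slightly more careful justification of the decomposition than the paper spells out, but the substance is identical.
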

\begin{proof}
This follows from the following inequalities:
$$\sum_{k=1}^m y_k = y^T\mathbf{1}\le y^T Ax= \mu^T x \le \sum_{e=1}^r \sum_{i\in S_e} \mu_i\cdot x_i \le \sum_{e=1}^r \|\mu(S_e)\|_{p_e} \cdot \|x(S_e)\|_{q_e} \le \sum_{e=1}^r c_e\cdot \|x(S_e)\|_{q_e}.$$ 
The first inequality is by primal feasibility; the second and last are by dual feasibility; the fourth is by H{\"o}lder's inequality.
\end{proof}
\section{Algorithm and analysis}
\label{sec:algo}  
  	Let $f(x) = \sum_{e=1}^r c_e\|x(S_e)\|_{q_e}$ denote the primal objective in~\eqref{eq:covering-prob}.
  	   
   \begin{algorithm}[H]
  	\LinesNumbered
	When the $k^{th}$ request $\sum_{i=1}^na_{ki}x_i\ge1$ arrives\\
	Let $\tau$ be a continuous variable denoting the current time.\;
	\While {the constraint is unsatisfied, i.e., $\sum_{i=1}^na_{ki}x_i<1$}{
		For each $i$ with $a_{ki}>0$, increase  $x_i$ at rate 
		$\frac{\partial x_i}{\partial \tau}=\frac{a_{ki}x_i+\frac1d}{\nabla_i f(x)}=\frac{a_{ki}x_i+\frac1d}{c_ex_i^{q_e-1}}\|x(S_e)\|_{q_e}^{q_e-1}$\;
		Increase $y_k$ at rate $\frac{\partial y_k}{\partial \tau}= 1$\;
		Set $\mu = A^Ty$\;
		}
  	\caption{Algorithm for $\ell_q$-norm packing/covering\label{alg:primal-dual}}
  \end{algorithm}
 
  In order to ensure that the gradient $\nabla f$ is positive, the primal solution $x$ starts off as $\delta\cdot \mathbf{1}$ where $\delta>0$ is arbitrarily small. So we assume that the initial primal value is zero.

It is clear that the algorithm maintains	a feasible and monotonically non-decreasing primal solution $x$. The dual solution $(y,\mu)$ is also monotonically non-decreasing, but not necessarily feasible. We will show  that  $(y,\mu)$ is $O(\log \rho d)$-approximately feasible, i.e. the packing constraints in~\eqref{eq:packing-disj} are violated by at most  an $O(\log \rho d)$ factor. 

\begin{lemma}\label{lem:cost}
The primal objective $f(x)$ is at most twice the dual objective $\sum_{k=1}^m y_k$.
\end{lemma}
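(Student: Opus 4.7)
The plan is to show that the primal objective grows at most twice as fast as the dual objective during the continuous updates, then integrate.

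First, I would compute the rate of growth of $f(x)$ with respect to the continuous time variable $\tau$ using the chain rule. Since only the variables with $a_{ki}>0$ move during the processing of the $k^{th}$ constraint, we have
\[
\frac{\partial f(x)}{\partial \tau} \;=\; \sum_{i:\, a_{ki}>0} \nabla_i f(x) \cdot \frac{\partial x_i}{\partial \tau}.
\]
Plugging in the algorithm's update rule $\partial x_i/\partial \tau = (a_{ki}x_i + 1/d)/\nabla_i f(x)$, the gradient factors cancel exactly, leaving
\[
\frac{\partial f(x)}{\partial \tau} \;=\; \sum_{i:\, a_{ki}>0} \left(a_{ki} x_i + \frac{1}{d}\right) \;=\; \sum_{i=1}^n a_{ki} x_i \;+\; \frac{|\{i: a_{ki}>0\}|}{d}.
\]

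Next, I would bound each term by $1$. The first term is strictly less than $1$ throughout the while loop, because the loop condition is precisely $\sum_i a_{ki} x_i < 1$. The second term is at most $1$ because the row-sparsity of $A$ is bounded by $d$, so at most $d$ indices have $a_{ki}>0$. This gives $\partial f(x)/\partial \tau \le 2$. On the dual side, the algorithm increases $y_k$ at rate exactly $1$, so $\partial (\sum_k y_k)/\partial \tau = 1$. Hence at every moment of continuous motion,
\[
\frac{\partial f(x)}{\partial \tau} \;\le\; 2\cdot \frac{\partial \left(\sum_{k=1}^m y_k\right)}{\partial \tau}.
\]

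Finally, I would integrate this differential inequality over the entire execution. The dual starts at $\sum_k y_k = 0$, while the primal starts at $f(\delta\mathbf{1})$, which can be made arbitrarily small by taking $\delta \to 0$ (as stipulated just above Lemma~\ref{lem:cost}). Therefore $f(\bar{x}) \le 2 \sum_{k=1}^m y_k$ at termination, as claimed.

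There is no real obstacle here; the lemma falls out of the way the update rates were designed, namely so that the gradient cancels in the chain rule and leaves a quantity controlled by the unsatisfied-constraint condition plus the row-sparsity. The delicate part is only the bookkeeping to confirm that the two summands on the right-hand side of the chain-rule expression are each at most $1$.
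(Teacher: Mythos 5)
Your proof is correct and follows essentially the same route as the paper's: apply the chain rule so that the $\nabla_i f(x)$ factors cancel against the update rule, bound the remaining sum by $2$ via the unsatisfied-constraint condition and the row sparsity $d$, and compare against the unit rate of dual increase. The only addition on your part is the explicit integration-and-initial-condition bookkeeping, which the paper leaves implicit.
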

\begin{proof}
We will show that the rate of increase of the primal is at most twice that of the dual. 
Consider the algorithm upon the arrival of some constraint $k$. Then
 \begin{align*}
 &\frac{d f(x)}{d\tau}= \sum_{i: a_{ki}>0} \nabla_i f(x) \cdot \frac{\partial x_i}{\partial \tau} = \sum_{i: a_{ki}>0} (a_{ki}x_i+\frac1d )\le 2.
 \end{align*}
 The inequality comes from the fact that (i) the process for the $k$th constraint is terminated when $\sum_ia_{ki}x_i=1$ and (ii) the number of non-zeroes in constraint $k$ is at most $d$. Also it is clear that the dual objective increases at rate one, which finishes the proof. 
 \end{proof}

 \begin{lemma}\label{lem:constraint}
The dual solution $(y,\mu)$ is $O(\log \rho d)$-approximately feasible, i.e.  
$$\|\mu(S_e)\|_{p_e} \,\, \le \,\, O(\log \rho d)\cdot c_e, \qquad \forall e\in [r].$$
\end{lemma}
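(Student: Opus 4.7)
The plan is to prove the bound separately for each $e \in [r]$. Fix such an $e$ and introduce the monotone potential $\Phi_e(\tau) := \|x(S_e)\|_{q_e}^{q_e} = \sum_{i \in S_e} x_i(\tau)^{q_e}$. Choose a parameter $\theta > 1$ and partition the timeline into \emph{phases}, where phase $j$ consists of all times $\tau$ with $\Phi_e(\tau) \in [\theta^j, \theta^{j+1})$. The upper end of the range of $\Phi_e$ over the execution is at most polynomial in $d\rho$, since primal feasibility forces any active $x_i$ to satisfy $x_i \le 1/a_{\min}$ and $|S_e|\le d$; at the lower end, the $+\tfrac{1}{d}$ term in the primal update rule prevents $\Phi_e$ from lingering at arbitrarily small values once any variable in $S_e$ is touched. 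A suitable choice of $\theta$ (e.g.\ $\theta = 2^{q_e/(q_e-1)}$ when $q_e > 1$, with an obvious modification for $q_e=1$) then caps the number of phases during which primal growth on $S_e$ actually occurs at $O(\log(d\rho))$.

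Writing $y = \sum_j y_{(j)}$ with $y_{(j)}$ denoting the dual mass created in phase $j$, and setting $\mu^{(j)} := A^T y_{(j)}$, the triangle inequality for $\|\cdot\|_{p_e}$ gives
\[
\|\mu(S_e)\|_{p_e} \;\le\; \sum_j \|\mu^{(j)}(S_e)\|_{p_e},
\]
so it suffices to establish the per-phase bound $\|\mu^{(j)}(S_e)\|_{p_e} = O(c_e)$. Within phase $j$ the quantity $\|x(S_e)\|_{q_e}^{q_e-1}$ appearing in the update rule is trapped in a constant-factor window around $\theta^{j(q_e-1)/q_e}$. Rewriting Algorithm~\ref{alg:primal-dual} as the coordinatewise identities
\[
c_e \,\frac{d(x_i^{q_e})}{d\tau} \;=\; q_e\bigl(a_{ki}x_i + \tfrac{1}{d}\bigr)\,\|x(S_e)\|_{q_e}^{q_e-1}
\qquad\text{and}\qquad
\frac{d\mu_i^{(j)}}{d\tau} \;=\; a_{ki},
\]
integrating over the phase, using the near-constancy of $\|x(S_e)\|_{q_e}^{q_e-1}$, and combining with H{\"o}lder's inequality $\sum_{i\in S_e}\mu_i^{(j)} x_i \le \|\mu^{(j)}(S_e)\|_{p_e}\|x(S_e)\|_{q_e}$ together with the global budget $\sum_{i\in S_e}\Delta_i^{(j)} \le (\theta-1)\theta^j$ on per-phase increments of $x_i^{q_e}$ should yield the desired $O(c_e)$ estimate.

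The main obstacle will be this last per-phase step. Unlike in the monotone-gradient setting of~\cite{ABCCCG0KNNP16}, there is no pointwise bound $\mu_i \le \nabla_i f(\bar x)$ one can hope for (as the counterexample in the introduction shows), so the argument has to work globally on the mixed norm rather than coordinatewise. The technical difficulty is that within a single phase $x_i$ can nevertheless vary appreciably even while $\Phi_e$ changes only by a factor of $\theta$, since a few coordinates may grow substantially while the rest stay almost stationary. Disentangling these two regimes, and carefully splitting $(a_{ki}x_i + 1/d)$ into the $a_{ki}x_i$ part (which feeds H{\"o}lder to give the main estimate) and the $1/d$ part (which contributes the additive $\log d$ overhead), is where the real work of the proof will lie.
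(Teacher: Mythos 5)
Your proposal gets the scaffolding right — fixing $e$, introducing the monotone potential $\Phi_e = \sum_{i\in S_e} x_i^{q_e}$, partitioning time into multiplicative phases of $\Phi_e$, applying the triangle inequality for $\|\cdot\|_{p_e}$ to reduce to a per-phase bound, and choosing $\theta$ as a function of $q_e$ — and these are indeed the moves the paper makes. But the proposal stops exactly where the proof has to do real work: you explicitly flag the per-phase estimate $\|\mu^{(j)}(S_e)\|_{p_e} = O(c_e/ (q_e-1))$ as "the main obstacle" and do not establish it. The mechanism you sketch for it does not close. H{\"o}lder's inequality $\sum_i \mu_i^{(j)} x_i \le \|\mu^{(j)}(S_e)\|_{p_e}\|x(S_e)\|_{q_e}$ upper-bounds the pairing by the quantity you are trying to control, so it runs in the wrong direction for an upper bound on $\|\mu^{(j)}(S_e)\|_{p_e}$. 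The "near-constancy of $\|x(S_e)\|_{q_e}^{q_e-1}$ within a phase" is true but is not by itself enough, because a single coordinate $x_i$ can still traverse a large range inside a phase and the naive coordinatewise integral does not converge to anything usable.

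The step your outline is missing is the following chain, which is where the paper actually wins. In phase $j \ge 1$, one drops the $1/d$ term (not the $a_{ki}x_i$ term, as you suggest) via $a_{ki}/(a_{ki}x_i + 1/d) \le 1/x_i$, obtaining $d\mu_i \le c_e\,x_i^{q_e-2}\,\Phi^{-(1-1/q_e)}\,dx_i$. Lower-bounding $\Phi \ge \Phi_0$ throughout the phase and integrating from $s_i$ to $t_i$ gives $\Delta\mu_i \le c_e\,(t_i^{q_e-1}-s_i^{q_e-1})/((q_e-1)\Phi_0^{1/p_e})$. The decisive algebraic trick is then superadditivity of $z\mapsto z^{p_e}$: since $(z_1+z_2)^{p_e} \ge z_1^{p_e}+z_2^{p_e}$, one gets $(t_i^{q_e-1}-s_i^{q_e-1})^{p_e} \le t_i^{q_e}-s_i^{q_e}$, which converts each $(\Delta\mu_i)^{p_e}$ into an increment of $\Phi$. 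Summing over $i\in S_e$ yields $\sum_i (\Delta\mu_i)^{p_e} \le c_e^{p_e}(q_e-1)^{-p_e}\,(\Phi_1-\Phi_0)/\Phi_0 \le c_e^{p_e}(q_e-1)^{-p_e}(\theta-1)$, hence $\|\mu_j\|_{p_e}\le c_e(\theta-1)^{1/p_e}/(q_e-1)$. The $1/(q_e-1)$ here is exactly what cancels the $q_e$ appearing in the phase count, and none of your described estimates produce it. You also gloss over the initial phase (small $\Phi_e$), which the paper handles separately by keeping the $1/d$ term and bounding $\|\mu(S_e)\|_1 \le c_e$ there, and you only gesture at the $q_e=1$ case. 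As written, the proposal is a plausible but incomplete plan, not a proof.
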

\begin{proof} Fix any $e\in [r]$. When $q_e=1$, the objective function of covering problem is reduced to the linear case $c_e\sum_{i\in S_e}x_i$ and we want to prove $\|\mu(S_e)\|_\infty\le O(\log \rho d)\cdot c_e$ for all $e\in [r]$. It is equivalent to prove that $\mu_i\le O(\log \rho d)\cdot c_e$ for all $i\in S_e$. In this case, we have
\begin{align*}
\frac{\partial x_i}{\partial \tau}&=\frac{a_{ki}\,x_i+\frac1d}{c_e},\quad \frac{\partial y_k}{\partial \tau}=1,\quad
\frac{\partial \mu_i}{\partial \tau}=a_{ki}\notag\\
\Rightarrow d\mu_i&=\frac{c_e\,a_{ki}}{a_{ki}\,x_i+\frac1d} d x_i\\
\Rightarrow \Delta\mu_i&\le\int_0^{\frac{1}{\min\{a_{ij}\}}}\frac{c_e\,a_{ki}}{a_{ki}\,x_i+\frac1d} d x_i=c_e\left(\log\left(\frac{a_{ki}}{\min\{a_{ij}\}}+\frac{1}{d}\right)+\log(d)\right)=O(\log \rho d)\cdot c_e
\end{align*}	
	
The case $q_e>1$ is the main part of the analysis. In order to prove the desired upper bound on $\|\mu(S_e)\|_{p_e} $ we use a potential function $\Phi=\sum_{i\in S_e}(x_i^{q_e})$. Let  phase zero denote  the period where 
$\Phi\le \zeta: = (\frac{1}{\max\{a_{ij}\}\cdot d^2})^{q_e}$, and for each $j\ge 1$, phase $j$ is the period where $\theta^{j-1}\cdot \zeta\le \Phi<\theta^j\cdot \zeta$. Here $\theta>1$ is a parameter depending on $q_e$ that will be determined later. Note that $\Phi\le d(\frac{1}{\min\{a_{ij}\}})^{q_e}$ as variable $x_i$ will never be increased beyond $1/ \min_{j=1}^m a_{ij}$. So the number of phases is at most $3q_e\cdot \log (d \rho) / \log \theta$. Next, we bound the increase in $\|\mu(S_e)\|_{p_e} $ for each phase separately.

For any phase, we have the following equalities
  \begin{align}
  \frac{\partial x_i}{\partial \tau}&=\frac{a_{ki}x_i+\frac1d}{c_ex_i^{q_e-1}}\|x(S_e)\|_{q_e}^{q_e-1},\quad \frac{\partial y_k}{\partial \tau}=1,\quad
  \frac{\partial \mu_i}{\partial \tau}=a_{ki}\notag\\
  \Rightarrow d\mu_i&=\frac{c_e\,a_{ki}\,x_i^{q_e-1}}{(\sum_{j\in S_e}x_j^{q_e})^{1-\frac{1}{q_e}}(a_{ki}x_i+\frac1d)} d x_i\label{eq:mu_x}
  \end{align}

\smallskip \noindent {\bf Phase zero.} Suppose that each $x_i$ increases to $\alpha_i$ in phase zero. From~\eqref{eq:mu_x} we have
$$d\mu_i\le\frac{d\,c_e\,a_{ki}\,x_i^{q_e-1}}{(\sum_{j\in S_e}x_j^{q_e})^{1-\frac{1}{q_e}}} d x_i \qquad 
  \Rightarrow\qquad \frac{1}{d \, c_e \, a_{ki}}d\mu_i\le \frac{x_i^{q_e-1}}{(\sum_{j\in S_e}x_j^{q_e})^{1-\frac{1}{q_e}}} d x_i.$$
This means that the increase $\Delta \mu_i$ in $\mu_i$ can be bounded as:
$$  \frac{1}{d \,c_e \,a_{ki}}\Delta \mu_i\le \int_{\delta}^{\alpha_i}\frac{x_i^{q_e-1}}{(\sum_{j\in S_e}x_j^{q_e})^{1-\frac{1}{q_e}}} d x_i\le \int_{0}^{\alpha_i}1 d x_i\le \alpha_i.$$

  Since in phase zero,  $\Phi\le (\frac{1}{\max\{a_{ij}\}\cdot d^2})^{q_e}$, we know that each $\alpha_i\le \frac{1}{\max\{a_{ij}\}\cdot d^2}$. So $\Delta\mu_i\le \frac{c_e}{d}$ and at the end of phase zero, we have $\|\mu(S_e)\|_{p_e}\le \|\mu(S_e)\|_1 \le c_e$.  
  The last inequality is because $d\ge \max_{e} |S_e|$.
  
  \smallskip \noindent {\bf Phase $j\ge 1$.} Let $\Phi_0$ and $\Phi_1$ be the value of $\Phi$ at the beginning and end of this phase respectively. In phase $j$, suppose that each $x_i$ increases from $s_i$ to $t_i$. Then,  
  \begin{align*}
d\mu_i&=\frac{c_e\, a_{ki}\, x_i^{q_e-1}}{(\sum_{j\in S_e}x_j^{q_e})^{1-\frac{1}{q_e}}(a_{ki}x_i+\frac1d)} d x_i\le \frac{c_e x_i^{q_e-2}}{(\sum_{j\in S_e}x_j^{q_e})^{1-\frac{1}{q_e}}} d x_i
\end{align*}
So the increase $\Delta \mu_i$ in $\mu_i$ during this phase is:
$$ \Delta\mu_i \le\int_{s_i}^{t_i}\frac{c_ex_i^{q_e-2}}{(\sum_{j\in S_e}x_j^{q_e})^{1-\frac{1}{q_e}}}dx_i.$$
     Note that variables $x_{i'}$ for $i'\ne i$ can also increase in this phase: so we cannot directly bound the above  integral. This is precisely where the potential $\Phi$ is useful. We know that throughout this phase, $\sum_{i\in S_e} x_i^{q_e}\ge\Phi_0$. So,
$$  \Delta\mu_i  \le c_e\int_{s_i}^{t_i}\frac{x_i^{q_e-2}}{\Phi_0^{1-\frac{1}{q_e}}}dx_i=c_e\frac{t_i^{q_e-1}-s_i^{q_e-1}}{(q_e-1)\Phi_0^{1-\frac{1}{q_e}}} = c_e\frac{t_i^{q_e-1}-s_i^{q_e-1}}{(q_e-1)\Phi_0^{ \frac{1}{p_e}}}.$$
Above we used the assumption that $q_e>1$ in evaluating the integral. Now,
  \begin{align*}
  (\Delta\mu_i)^{p_e}&\le \frac{c_e^{p_e}}{(q_e-1)^{p_e} \, \Phi_0} \cdot \left(t_i^{q_e-1}-s_i^{q_e-1}\right)^{p_e}\le \frac{c_e^{p_e}}{(q_e-1)^{p_e} \, \Phi_0} \cdot \left(t_i^{(q_e-1)p_e}-s_i^{(q_e-1)p_e}\right) \\
& = \frac{c_e^{p_e}}{(q_e-1)^{p_e} \, \Phi_0} \cdot \left(t_i^{q_e }-s_i^{q_e}\right) 
  \end{align*}
The first inequality above uses the fact that $(z_1+z_2)^{p_e} \ge z_1^{p_e} + z_2^{p_e}$ for any $p_e\ge 1$ and $z_1,z_2\ge 0$, with $z_1=s_i^{q_e-1}$ and $z_2=t_i^{q_e-1} -s_i^{q_e-1}$. The last equality uses $\frac{1}{p_e}+\frac{1}{q_e}=1$.

We can now bound
$$\sum_{i\in S_e}  (\Delta\mu_i)^{p_e} \le \frac{c_e^{p_e}}{(q_e-1)^{p_e} \, \Phi_0} \cdot \sum_{i\in S_e}  \left(t_i^{q_e }-s_i^{q_e}\right) = \frac{c_e^{p_e}}{(q_e-1)^{p_e} \, \Phi_0} \left( \Phi_1-\Phi_0\right) \le \frac{c_e^{p_e}}{(q_e-1)^{p_e}} (\theta-1)$$

Let $\mu_j\in \R^{|S_e|}$ denote the increase in $\mu(S_e)$ during phase $j$. It follows from the above that  $\|\mu_j\|_{p_e}\le \frac{c_e}{q_e-1}(\theta-1)^{1/p_e}$. 

\smallskip
\noindent {\bf Combining across phases.} Note that $\mu=\sum_{j\ge 0}\mu_j$. By triangle inequality, we have 
\begin{equation}\label{eq:dual-apx}
\|\mu\|_{p_e}\le \sum_{j\ge 0} \|\mu_j\|_{p_e} \le c_e+ \sum_{j\ge 1} \|\mu_j\|_{p_e}  \le c_e \left( 1 + \frac{3q_e(\theta-1)^{1/p_e}}{(q_e-1)\log \theta}\cdot \log(d\rho)\right)
\end{equation}
To complete the proof we show next that for any $q_e>1$, there is some choice of $\theta>1$ such that the right-hand-side above is $O(\log (d\rho))\cdot c_e$. 
\begin{itemize}
\item If $q_e\ge 2$ then setting $\theta=2$, we have $\frac{3q_e }{(q_e-1)}(\theta-1)^{1/p_e}/\log \theta \le 6$.
\item If $1<q_e<2$ then set $\theta=1+(q_e-1)^{-\epsilon p_e}$, where $\epsilon =\frac{1}{-\log(q_e-1)}>0$. We have
  \begin{align*}
  \frac{(\theta-1)^{\frac1{p_e}}}{\log\theta}\le\frac{(\theta-1)^{\frac1{p_e}}}{\log(q_e-1)^{-\epsilon p_e}}
  =\frac{(q_e-1)^{-\epsilon}}{\log(q_e-1)^{-\epsilon p_e}}=\frac{(q_e-1)^{-\epsilon}}{-\epsilon p_e\log(q_e-1)}
=\frac{(q_e-1)^{-\epsilon} }{p_e} = \frac{2}{p_e}.
   \end{align*}
   The first inequality above uses that $\theta-1= (q_e-1)^{-\epsilon p_e} >1$. Thus we have 
$$\frac{3q_e (\theta-1)^{1/p_e}}{(q_e-1)\log \theta }\le \frac{6q_e}{(q_e-1)p_e}=6,$$
where the last equality uses $\frac{1}{p_e}+\frac{1}{q_e}=1$.
\end{itemize}

So in either case we have that the right-hand-side of~\eqref{eq:dual-apx} is at most $(1+6\log(d\rho))\cdot c_e$.   \end{proof}

Combining Lemmas~\ref{lem:weak-duality}, \ref{lem:cost} and \ref{lem:constraint}, we obtain Theorem~\ref{thm:main}. 

  \section{Applications}
  \subsection{Online Buy-at-Bulk Network Design}
In the non-uniform buy-at-bulk problem, we are given a directed graph $G=(V,E)$ with a monotone subadditive cost function $g_e:\R_+\rightarrow\R_+$ on each edge $e\in E$ and a collection $\{(s_i,t_i)\}_{i=1}^m$ of $m$ source/destination pairs. The goal is to find an $s_i-t_i$ path $P_i$ for each $i\in[m]$ such that the objective $\sum_{e\in E} g_e(load_e)$ is minimized; here $load_e$ is the number of paths using $e$. An equivalent view of this problem involves two costs $c_e$ and $\ell_e$ for each edge $e\in E$ and the objective $\sum_{e\in \cup P_i} c_e + \sum_{e\in E} \ell_e\cdot load_e$.
  In the online setting, the pairs $(s_i,t_i)$ arrive over time and we need to decide on the path  $P_i$ immediately after the $i^{th}$ pair arrives. Recently, \cite{ECKP15} gave a modular online algorithm for non-uniform buy-at-bulk with competitive ratio $O(\alpha\beta\gamma\cdot \log^5n)$ where: 
  \begin{itemize}
  \item $\alpha$ is the ``junction tree'' approximation ratio,
  \item $\beta$ is the integrality gap of the natural LP for single-sink instances,
  \item $\gamma$ is the competitive ratio of an online algorithm for single-sink instances.  
  \end{itemize}
  See \cite{ECKP15} for  more details. One of the main components in this result was an $O(\log^3n)$-competitive fractional  online algorithm for a certain mixed packing/covering LP. Here we show that Theorem~\ref{thm:main} can be used to provide a better (and tight) $O(\log n)$-competitive ratio.  This leads to the following improvement:
\begin{theorem}\label{thm:bab}
There is   an $O(\alpha\beta\gamma\cdot \log^3 n)$-competitive ratio for  non-uniform buy-at-bulk, where $\alpha, \beta, \gamma$ are as above.  
\end{theorem}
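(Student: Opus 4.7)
The plan is to invoke the modular framework of \cite{ECKP15} verbatim for the combinatorial parts and to replace only the fractional online LP solver, which is where the $\log^5$ in~\cite{ECKP15} is weakest. First I would recall that \cite{ECKP15} reduces online non-uniform buy-at-bulk to three combinatorial primitives (the junction-tree approximation losing a factor of $\alpha$, the natural single-sink LP rounding losing $\beta$, and the online single-sink algorithm losing $\gamma$) together with the online solution of a single mixed packing/covering LP. All three combinatorial losses are unchanged by what we do here. In \cite{ECKP15} the online solution of this LP was obtained with competitive ratio $O(\log^3 n)$, and that is exactly the factor we will shrink.

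The key step is to recast the mixed packing/covering LP from~\cite{ECKP15} as an instance of~\eqref{eq:covering-prob}. The LP has covering constraints that force each arriving demand pair to be (fractionally) routed, and packing constraints that bound the aggregate load on edges against the corresponding buy-at-bulk cost coefficients; equivalently, for each edge $e$ there is a constraint of the form $\|\text{load on }e\|\le c_e$ where the norm on the load is an $\ell_{q_e}$-norm dictated by the subadditive cost segment associated with $e$. Dualizing these packing bounds into the objective (which, in the offline LP, is standard Lagrangian aggregation) turns the problem into a pure covering problem with objective $\sum_e c_e \|x(S_e)\|_{q_e}$, where $S_e$ is the set of flow variables that contribute to the load at edge $e$ and $c_e$ is the buy-cost for that edge/segment. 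This is precisely the template of~\eqref{eq:covering-prob}.

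Next I would verify that the parameters relevant for Theorem~\ref{thm:main} are $\mathrm{poly}(n)$: the row sparsity of $A$ and $\max_e |S_e|$ are both bounded by the number of edges a single demand path can use (hence $O(n)$), and the ratio $\rho$ of the largest to smallest nonzero entry of $A$ is polynomial in $n$ after the usual pruning of edges that are too expensive or demands that are too far relative to the current best solution (a standard guess-and-double on \opt, as already used in~\cite{ECKP15}). Hence $\log d + \log \rho = O(\log n)$, and Theorem~\ref{thm:main} yields an $O(\log n)$-competitive online algorithm for this LP. Plugging this in place of the $O(\log^3 n)$ solver of~\cite{ECKP15}, while leaving $\alpha,\beta,\gamma$ intact, immediately gives the claimed $O(\alpha\beta\gamma\log^3 n)$ ratio (the remaining two logs come from the online rounding step and from the guess-and-double overhead).

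The main obstacle is the LP identification step: one has to check that the specific mixed packing/covering LP written in~\cite{ECKP15} really does become a sum-of-$\ell_{q_e}$-norm covering program under Lagrangian merging of the packing side, and that solving the Lagrangian covering form $O(\log n)$-competitively is enough to imply an $O(\log n)$-competitive solution to the original mixed LP (this is standard, but needs a sanity check that the dual feasibility in~\eqref{eq:packing-disj} translates into the capacity bounds that the rest of~\cite{ECKP15}'s rounding consumes). Once that bookkeeping is done, the theorem follows by substitution into the black-box reduction of~\cite{ECKP15}.
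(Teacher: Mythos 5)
Your top-level plan — keep the modular reduction of~\cite{ECKP15} intact and only replace the fractional online LP solver — is the same as the paper's. But the description of how the ECKP15 LP becomes an instance of~\eqref{eq:covering-prob} is the crux of the proof, and your account of it does not match and appears to be wrong.

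You claim the $\ell_{q_e}$-norm structure arises because ECKP15's LP has packing constraints of the form $\|\text{load on }e\| \le c_e$ with a norm ``dictated by the subadditive cost segment,'' and that one obtains~\eqref{eq:covering-prob} by Lagrangian-merging these packing bounds into the objective. Neither of these is what happens. In ECKP15's LP the subadditive cost has already been linearized to the two-cost form $c_e + \ell_e\cdot \text{load}_e$, the costs $c_e,\ell_e$ sit in the objective (not in constraints), and the only packing constraint is $f_{r,u,e} \le x_{e,r}$. The paper's reformulation proceeds very differently: (i) the coupling variables $z_{ir}$ are eliminated using max-flow/min-cut, which converts the covering side into exponentially many cut-based constraints (handled online via a separation oracle, and with constraint matrix entries in $\{0,1\}$, hence $\rho=1$ automatically — no guess-and-double or pruning is used or needed); (ii) the variable $x_{e,r}$ is eliminated by noting that at optimality $x_{e,r}=\max_{u\in\T} f_{r,u,e}$, which turns the objective into a sum of $\ell_\infty$-norms plus linear terms; (iii) the $\ell_\infty$-norm is replaced by an $\ell_{\log n}$-norm at a constant-factor loss, after which the objective is literally a sum of $\ell_{\log n}$- and $\ell_1$-norms and Theorem~\ref{thm:main} applies with $O(\log d+\log\rho)=O(\log n)$. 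Your proposal does not contain the min-cut reformulation, the $\max\to\ell_\infty\to\ell_{\log n}$ step, or the separation-oracle bound on the number of iterations; these are the technically nontrivial parts of the argument, and the Lagrangian-aggregation story you substitute for them does not produce the required covering program. Also minor: your bookkeeping ``the remaining two logs come from the online rounding step and from the guess-and-double overhead'' is speculative; the paper simply inherits the residual $O(\log^2 n)$ overhead of ECKP15's non-LP steps and replaces the $O(\log^3 n)$ LP solver by an $O(\log n)$ one.
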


\def\T{\ensuremath{{\cal T}}\xspace} 
  
\medskip  \noindent {\bf The LP relaxation.} Let $\T=\{s_i,t_i:i\in [m]\}$  denote the set of all sources/destinations. For each $i\in[m]$ and root $r\in V$  variable $z_{ir}$ denotes the extent to which both $s_i$ and $t_i$ route to/from  $r$.  For each $r\in V$ and $e\in E$, variable $x_{er}$ denotes the extent to which edge $e$ is used in the routing to root $r$. For each $r\in V$ and $u\in\T$, variables $\{f_{r,u,e}:e\in E\}$ represent a flow between $r$ and $u$.  \cite{ECKP15} relied on solving the following LP:
  \begin{align*}
  \min\quad&\sum_{r\in V}\sum_{e\in E} c_e\cdot x_{e,r}\,\,+\,\, \sum_{r\in V}\sum_{e\in E} \ell_e\cdot \sum_{u\in \T} f_{r,u,e}\\
  \mbox{s.t.}\quad& \sum_{r\in V}z_{ir} \ge 1,\qquad \forall i\in [m]\\
&   \{f_{r,s_i,e}:e\in E\} \mbox{ is a flow from $s_i$ to $r$ of $z_{ir}$ units}, \qquad \forall r\in V,\, i\in[m]\\
&   \{f_{r,t_i,e}:e\in E\} \mbox{ is a flow from $r$ to $t_i$ of $z_{ir}$ units}, \qquad \forall r\in V,\, i\in[m]\\
& f_{r,u,e}\le x_{e,r}, \qquad \forall u\in \T,\, e\in E\\
  &x, f, z \ge0
  \end{align*}
The online algorithm in \cite{ECKP15} for this LP has competitive ratio $O(D\cdot \log n)$ w.r.t. the optimal integral solution;  here $D$ is an upper bound on the length of any $s_i-t_i$ path (note that $D$ can be as large as $n$). Using a height reduction operation, they could ensure that $D=O(\log n)$ while incurring an additional $O(\log n)$-factor loss in the objective. This lead to the $O(\log^3n)$ factor for the fractional online algorithm. Here we provide an improved $O(\log n)$-competitive algorithm for this LP which does not require any bound on the path-lengths.  

\def\mc{\mathsf{MC}}

For any $r\in V$ and $u\in \T$, let $\mc(r,u)$ denote the $u-r$ (resp. $r-u$) minimum cut in the graph with edge capacities $\{f_{r,u,e} : e\in E\}$ if $u$ is a source (resp. destination).  By the max-flow min-cut theorem, it follows that $z_{ir} \le \min\left\{ \mc(r,s_i) \,,\, \mc(r,t_i) \right\}$. Using this, we can 
  combine the first three constraints of the above LP into the following:
$$\sum_{r\in V} \min\left\{ \mc(r,s_i) \,,\, \mc(r,t_i) \right\} \,\, \ge \,\, 1,\qquad \forall i\in[m].$$
For a fixed $i\in [m]$, this  constraint is  equivalent to the following. For each $r\in V$, pick either an $s_i-r$ cut (under capacities $f_{r,s_i,\star}$) or an $r-t_i$ cut (under capacities $f_{r,t_i,\star}$), and check if the total cost of these cuts is at least $1$. This leads to the  following reformulation that eliminates the $x$ and $z$ variables.
  \begin{align*}
  \min\quad&\sum_{r\in V}\sum_{e\in E} c_e\cdot \left( \max_{u\in \T} f_{r,u,e} \right)  \,\,+\,\,\sum_{r\in V}\sum_{e\in E} \ell_e\cdot \sum_{u\in \T} f_{r,u,e}\\
  \mbox{s.t.}\quad&   
  \sum_{r\in R_s} f_{r,s_i}(S_r ) \,+\, \sum_{r\in R_t} f_{r,t_i}(T_r ) \ge 1,\qquad   \forall i\in [m], \,\, \forall (R_s,R_t) \mbox{ partition of }V, \\
  &      \qquad     \forall S_r \mbox{ : $s_i-r$ cut},\, \forall r\in R_s,\,\,  \forall  T_r  \mbox{ : $r-t_i$ cut},\, \forall r\in R_t\\
  & f \ge 0.
  \end{align*}

\ignore{
  \begin{align*}
  \min\quad&\sum_{r\in V}\sum_{e\in E} c_e\cdot \left( \max_{u\in \T} f_{r,u,e} \right)  \,\,+\,\,\sum_{r\in V}\sum_{e\in E} \ell_e\cdot \sum_{u\in \T} f_{r,u,e}\\
  \mbox{s.t.}\quad&   
  \sum_{r\in R_s} f_{r,s_i}(\delta^+(S_r)) \,+\, \sum_{r\in R_t} f_{r,t_i}(\delta^-(T_r)) \ge 1,\qquad \forall (R_s,R_t) \mbox{ partition of }V, \\
  &      \qquad     \forall \{s_i\} \sse S_r  \sse V\setminus \{r\} \mbox{ for }r\in R_s,\,\,  \forall \{t_i\} \sse T_r  \sse V\setminus \{r\} \mbox{ for }r\in R_s, \,\,  
  \forall i\in [m]\\
  & f \ge 0.
  \end{align*}
}
 
  Note that $\ell_{\log(n)}$-norm is a constant approximation for $\ell_{\infty}$. Therefore we can reformulate the above  objective function (at the loss of a constant factor) as the sum of  $\ell_{\log(n)}$ and $\ell_{1}$ norms.  Our fractional solver applies to this convex covering problem, and yields an $O(\log n)$-competitive ratio (note that $\rho=1$ for this instance). In order to  get a polynomial running time, we can use the natural “separation oracle” approach (as in Section~\ref{sec:prelim}) to produce  violated covering constraints.
  
  	  \begin{algorithm}[H]
  	  	\LinesNumbered
  	  	When the $i^{th}$ request $(s_i,t_i)$ arrives\\
  	  	\Repeat{$\sum_{r\in V} \min\left\{ \mc(r,s_i) \,,\, \mc(r,t_i) \right\} \, \ge \, \frac12$}{
 			For each $r\in V$, compute $\mc(r,s_i)$ and $\mc(r,t_i)$ and the respective cuts $S_r$ and $T_r$\;
			Let $R_s = \{r\in V \,:\, \mc(r,s_i) \le \mc(r,t_i)\}$ and $R_t=V\setminus R_s$\;
			Run Algorithm~\ref{alg:primal-dual} with  constraint $ \sum_{r\in R_s} f_{r,s_i}(S_r ) \,+\, \sum_{r\in R_t} f_{r,t_i}(T_r ) \ge 1$\;
  	  	}
  	  	
  	  	\caption{Separation Oracle Based Algorithm for Buy-at-Bulk}
  	  \end{algorithm}
  	  
Each iteration above runs in polynomial time since the minimum cuts can be  computed in polynomial time. In order to bound the number of iterations, consider the potential $\psi = \sum_{e\in E} (f_{r,s_i,e} + f_{r,t_i,e} )$. Note that $0\le \psi \le 2|E|$ and each iteration increases $\psi$ by at least $\frac12$. So the number of iterations is at most $4|E|$. 

\subsection{Throughput Maximization with $\ell_p$-norm Capacities}\label{sec:routing}
   The online problem of maximizing multicommodity flow was studied in \cite{AAP93,BN09}. In this problem, we are given a directed graph with edge capacities $u(e)$. Requests $(s_i,t_i)$ arrive in an online fashion. The algorithm should choose a path between $s_i$ and $t_i$ and allocate a bandwidth of 1 on the path to serve request $i$. The total bandwidth allocated on any edge is not allowed to exceed its capacity. This is the simplest version of the multicommodity routing problem. Here we consider an extension  with $\ell_p$-norm capacity  constraints on subsets of edges. This can be used to model situations where edges are provided by multiple  agents. Each agent $j$ owns a subset $S_j$ of edges and it requires the $\ell_{p_j}$-norm of the bandwidths of these edges to be at most $c_j$. In this section we assume the $S_j$ are disjoint. Our result also applies to general $S_j$ via  a reduction to disjoint instances.
   
   In the case of overlapping $S_j$. We can reduce this dual problem to the case with disjoint $S_j$ as follows. 
   
   Define $\bar{A}$ by the following: for each column $a_{\cdot i}$ of $A$, $\bar{A}$ has $\ell$ copies of $a_{\cdot i}$ where $\ell$ is the number of subsets $S_j$ that contain $\mu_i$. Then we can define $\bar{\mu}=\bar{A}^Ty$  and $\bar{S}_j$ to consist of {\em disjoint} columns of $\bar{A}$. 
   
   Since only variables $y_k$ arrive  online and all the updates are performed with respect to $y_k$, it follows that the $\bar{\mu}$ values of all copies of column $i$ (in $A$) are the same as $\mu_i$. So the following problem is equivalent to the original problem with overlapped $S_j$:
   \begin{equation*}
   \max\, \left\{  \sum_{k=1}^m y_k\, :\,  \bar{A}^Ty=\bar{\mu},\,   \|\bar{\mu}(\bar{S}_j)\|_{p_j}\le c_j \, \forall j\in [r],\,  y\ge 0\right\}.
   \end{equation*}
   Note that the primal form of this problem corresponds to what is solved in Section~\ref{sec:algo}.

   \begin{theorem}\label{thm:routing}
   	Assume that $c_j = \Omega(\log m)\cdot |S_j|^{1/p_j}$ for each $j$. Then there is a randomized $O(\log m)$-competitive online algorithm for throughput maximization  with $\ell_p$-norm capacities, where $m$ is the number of edges in the graph.
   \end{theorem}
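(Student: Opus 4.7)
The plan is to combine the fractional primal-dual algorithm of Theorem~\ref{thm:main} with a standard randomized online rounding in the spirit of~\cite{AAP93,BN09}.

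First I would formulate the fractional throughput-maximization LP in the form of~\eqref{eq:packing-disj}. Introduce a path-flow variable $y_{i,P}$ for every request $i\in[m]$ and every $s_i$-$t_i$ path $P$, with objective $\max\sum_{i,P} y_{i,P}$. The load on an edge $e$ is $\mu_e=\sum_{i,P:e\in P}y_{i,P}$, so each agent's constraint is $\|\mu(S_j)\|_{p_j}\le c_j$; the $S_j$ are disjoint by assumption. The ``at most one unit per request'' constraint $\sum_P y_{i,P}\le 1$ is encoded by adjoining, for each request $i$, a fresh ``pseudo-edge'' $e_i$ lying on every path variable of $i$ with its own singleton subset of $\ell_1$-capacity $1$. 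All subsets remain disjoint, matching the setup of~\eqref{eq:packing-disj}.

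Next I would invoke Theorem~\ref{thm:main}. The underlying matrix is $\{0,1\}$-valued, so $\rho=1$, and the row sparsity together with $\max_e|S_e|$ are both bounded by $m$, yielding $d=O(m)$ and an $O(\log m)$-competitive fractional online algorithm. The exponentially many path variables per request are handled exactly by the separation-oracle approach of Section~\ref{sec:prelim} (also used for buy-at-bulk): whenever request $i$ arrives, repeatedly compute a shortest $s_i$-$t_i$ path under the current gradient-based edge lengths and feed the associated covering constraint to Algorithm~\ref{alg:primal-dual}, stopping once every such path is satisfied; a standard potential argument bounds the total work.

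For the rounding, pick $\alpha=\Theta(\log m)$ with a sufficiently large constant so that the rescaled solution $y/\alpha$ strictly satisfies every constraint with slack, i.e.\ $\sum_P y_{i,P}/\alpha\le 1$ for all $i$ and $\|\mu(S_j)/\alpha\|_{p_j}\le c_j/C_1$ for all $j$, where $C_1$ is a large constant; Lemma~\ref{lem:constraint} guarantees this is possible. Independently for each arriving request $i$, serve it with probability $\bar y_i:=\sum_P y_{i,P}/\alpha$ along a path $P$ chosen with conditional probability $y_{i,P}/\sum_{P'} y_{i,P'}$. The expected throughput is $(\sum_k y_k)/\alpha = \Omega(\opt/\log m)$ by Lemma~\ref{lem:cost} and weak duality (Lemma~\ref{lem:weak-duality}). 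For feasibility, each edge load $X_e$ is a sum of independent Bernoullis with mean $\mu_e/\alpha$; a multiplicative Chernoff bound plus a union bound over the $m$ edges yields, with high probability, $X_e\le 2\mu_e/\alpha+O(\log m)$ for all $e$ simultaneously. Applying the triangle inequality for $\|\cdot\|_{p_j}$ gives $\|X(S_j)\|_{p_j}\le 2c_j/C_1+O(\log m)\cdot|S_j|^{1/p_j}$, and the hypothesis $c_j=\Omega(\log m)|S_j|^{1/p_j}$ (with a large enough constant) together with large $C_1$ makes this at most $c_j$.

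The main obstacle is this last step. Unlike the edge-capacity case ($p_j=\infty$), a per-edge Chernoff bound does not by itself certify an $\ell_{p_j}$-capacity, because that norm mixes loads across all of $S_j$; a naive union bound over per-edge violations loses an unwanted $|S_j|^{1/p_j}$ factor. The hypothesis $c_j=\Omega(\log m)|S_j|^{1/p_j}$ is precisely calibrated to absorb the $\ell_{p_j}$-norm of the $\log m$-scale ``concentration slack'' vector $\log m\cdot\mathbf{1}_{S_j}$, and once this calibration is in place the rest of the analysis is routine bookkeeping.
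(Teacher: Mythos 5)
Your proposal is correct and follows essentially the same route as the paper: formulate throughput maximization as the packing dual~\eqref{eq:packing-disj} (your pseudo-edges $e_i$ with $\ell_1$-capacity $1$ are exactly the paper's $\|\mu_i\|_1\le 1$ constraint, which in the primal becomes the singleton $z_i$ term), run Algorithm~\ref{alg:primal-dual} with a shortest-path separation oracle, scale down by $\Theta(\log m)$ for dual feasibility, and then do independent path rounding with a per-edge Chernoff bound, a union bound over the $m$ edges, and the $\ell_{p_j}$-triangle inequality together with the hypothesis $c_j=\Omega(\log m)|S_j|^{1/p_j}$ to absorb the additive concentration slack. The only cosmetic difference is that you state the $O(\log m)$ rescaling for feasibility explicitly and close with a triangle-inequality bound, whereas the paper folds the rescaling into the choice of $f_{i,P}$ and uses the equivalent inequality $(a+b)^{p_j}\le 2^{p_j}(a^{p_j}+b^{p_j})$ termwise.
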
  
   We note that a similar ``high capacity'' assumption is also needed in the linear special case~\cite{AAP93,BN09} where each $|S_j|=1$. 
   
   In a fractional version of the problem, a request can be satisfied by several paths and the allocation of bandwidth can be in range $[0,1]$ instead of being restricted from $\{0,1\}$. For request $(s_i,t_i)$, let $\mathcal{P}_i$ be the set of simple paths between $s_i$ and $t_i$. Variable $f_{i,P}$ is defined to be the amount of flow on the path $P$ for request $(s_i,t_i)$. The total profit of algorithm is the (fractional) number of requests served and the performance is measured with respect to the maximum number of requests that could be served if the requests are known beforehand. We describe the problem as a packing problem:
   \begin{align}
   	\max\quad&\sum_{i}\sum_{P\in \mathcal{P}_i} f_{i,P}\label{eq:MRD}\\
   	\mbox{s.t.}\quad& \sum_{P\in \mathcal{P}_i}f_{i,P}\le 1,&\forall i\label{eq:z}\\
   	&\sum_i\sum_{P\in \mathcal{P}_i:e\in P}f_{i,P}= \mu_e,&\forall e\label{eq:x}\\
   	&\|\mu(S_j)\|_{p_j}\le c_j,&\forall j\label{eq:lq_cons}\\
   	&f\in \R_+^n\notag
   \end{align}
   Note that single edge capacity is a special case of~\eqref{eq:lq_cons} with $|S_j|=1$ and any $p_j$.
   
   Rewrite constraint~\eqref{eq:z} as $\sum_{P\in\mathcal{P}_i}f_{i,P}=\mu_i, \, \mu_i\le 1$, we have problem~\eqref{eq:MRD} is equivalent to
   \begin{align}
   \max\quad&\sum_{i}\sum_{P\in \mathcal{P}_i} f_{i,P}\label{eq:z_c}\\
   \mbox{s.t.}\quad& \sum_{P\in \mathcal{P}_i}f_{i,P}=\mu_i,&\forall i\label{eq:x_c}\\
   &\sum_i\sum_{P\in \mathcal{P}_i:e\in P}f_{i,P}= \mu_e,&\forall e\\
   &\|\mu_i\|_1\le 1, &\forall i\\
   &\|\mu(S_j)\|_{p_j}\le c_j,&\forall j\\
   &f\in \R_+^n\notag
   \end{align}
   Let $z_i$, $x_e$ be the primal variables corresponding to constraint~\eqref{eq:z_c},~\eqref{eq:x_c} respectively. Then the primal problem is
   \begin{align}
   \min\quad&\sum_{j}c_j \|x(S_j)\|_{q_j}+\sum_i \|z_i\|_\infty\\
   \mbox{s.t.}\quad& z_i+\sum_{e\in P}x_e\ge 1,&\forall i,\ P\in\mathcal{P}_i\\
   &x,z\in \R_+^n\notag
   \end{align}
   We drop the $\|\cdot\|_\infty$ because it only contains singleton $z_i$. Hence the corresponding primal problem is  the following.
   
   \begin{align}
   	\min\quad&\sum_{j}c_j \|x(S_j)\|_{q_j}+\sum_i z_i\label{eq:MRP}\\
   	\mbox{s.t.}\quad& z_i+\sum_{e\in P}x_e\ge 1,&\forall i,\ P\in\mathcal{P}_i\\
   	&x,z\in \R_+^n\notag
   \end{align}
   where $z$ is the dual variable of constraints~\eqref{eq:z} and $x$ is the dual variable of constraints~\eqref{eq:x} to~\eqref{eq:lq_cons}. 
      In the primal form, each request is associated with exponential number of constraints. Therefore, we again need to apply a separation oracle like Algorithm~\ref{alg:sep}. This separation oracle is based on the shortest $s'_i,t_i$-path in a modified graph where we add a dummy vertex $s'_i$ and edge $(s'_i,s_i)$. Let $z_i$ be the variable corresponding to edge $(s'_i,s_i)$.
   
   \begin{algorithm}[H]
   	\LinesNumbered
   	When the $i^{th}$ request $(s_i,t_i)$ arrives\\
   	\While {shortest $s'_i,t_i$-path has value less than $\frac12$}{
   		Let $P$ be the path corresponding to the shortest $s'_i,t_i$-path\;
   		Run the Algorithm~\ref{alg:primal-dual} with request $z_i+\sum_{e\in P}x_e\ge 1$\;
   	}
   	
   	\caption{Separation Oracle Based Algorithm for Multicommodity Routing}
   \end{algorithm}
   The shortest path algorithm runs in polynomial time and it can find the constraint with $z_i+\sum_{e\in P}x_e<\frac12$. Define potential function $\psi=z_i+\sum_{e\in E}x_e$. We know that $\psi\le m$ where $m$ is the number of edges in the modified graph since our algorithm is a minimal algorithm, that is, each iteration terminates with $z_i+\sum_{e\in P}x_e= 1$. In each iteration, $\psi$ increases by at least $\frac12$. Then the total number of iteration is at most $2m$. Finally, by doubling the variables we have a feasible solution and the objective increases by a two factor.
   
   To get an integer solution, we use a simple online randomized rounding algorithm. For each request $(s_i,t_i)$, $i\in [n]$, define $X_{i,P}$ for all $P\in\mathcal{P}_i$ to be a random variable with $\Pr[X_{i,P}=1]=\frac{f_{i,P}}{8}$. Define random variable $X_{i,e}=\sum_{P\in\mathcal{P}_i:e\in P}X_{i,P}$ and $X_e=\sum_{i=1}^nX_{i,e}$. From constraint~\eqref{eq:x}, $\E(X_e)=\frac{\mu_e}{8}$. For each request $(s_i,t_i)$, the rounding algorithm will choose path $P\in\mathcal{P}_i$ with probability $\Pr[X_{i,P}=1]$ and choose no path with the remaining probability. Let $\delta=8\log m$. We apply the following Chernoff bound theorem.
   \begin{theorem}[Chernoff Bound]
   	Let $X=\sum_{i=1}^nX_i$ where $X_i=1$ with probability $p_i$ and $X_i=0$ with probability $1-p_i$, and all $X_i$ are independent. Let $\mu=\E(X)=\sum_{i=1}^np_i$. Then
   	$$\Pr[X\ge (1+\delta)\mu]\le e^{-\frac{\delta^2}{2+\delta}\mu}\mbox{ for all }\delta>0.$$
   \end{theorem}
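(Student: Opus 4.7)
The plan is to prove this via the classical exponential moment (Chernoff) technique. For any $t>0$, apply Markov's inequality to the non-negative random variable $e^{tX}$ to obtain
$$\Pr[X \ge (1+\delta)\mu] \,=\, \Pr[e^{tX} \ge e^{t(1+\delta)\mu}] \,\le\, \E[e^{tX}] \cdot e^{-t(1+\delta)\mu}.$$
The remaining work is to upper bound the moment generating function $\E[e^{tX}]$, optimize over $t$, and then pass from the resulting ``tight'' rate function to the slightly weaker but more usable rate $\delta^2/(2+\delta)$.

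First I would compute the MGF using independence of the $X_i$: since each $X_i$ is Bernoulli$(p_i)$,
$$\E[e^{tX}] \,=\, \prod_{i=1}^n \E[e^{tX_i}] \,=\, \prod_{i=1}^n \bigl(1 + p_i(e^t-1)\bigr) \,\le\, \prod_{i=1}^n e^{p_i(e^t-1)} \,=\, e^{\mu(e^t-1)},$$
where the inequality uses the elementary bound $1+x \le e^x$ (valid since $e^t-1 \ge 0$ when $t\ge 0$). Substituting back gives $\Pr[X\ge (1+\delta)\mu] \le \exp\bigl(\mu(e^t-1) - t(1+\delta)\mu\bigr)$ for every $t>0$. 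Differentiating the exponent in $t$ and setting it to zero yields the minimizer $t^\star = \ln(1+\delta)>0$, which upon substitution produces the classical Chernoff bound
$$\Pr[X\ge (1+\delta)\mu] \,\le\, \left(\frac{e^\delta}{(1+\delta)^{1+\delta}}\right)^{\!\mu}.$$

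To reach the stated form it remains to establish the analytic inequality
$$(1+\delta)\ln(1+\delta) - \delta \,\ge\, \frac{\delta^2}{2+\delta}, \qquad \delta > 0,$$
since taking $-\mu$ times the inequality and exponentiating converts the right-hand side of the previous display into $e^{-\mu\delta^2/(2+\delta)}$. I expect this analytic step to be the main (though still routine) obstacle. The standard route is to define $h(\delta) := (1+\delta)\ln(1+\delta) - \delta - \delta^2/(2+\delta)$, observe $h(0)=0$, and verify that $h'(\delta) = \ln(1+\delta) - \delta(4+\delta)/(2+\delta)^2$ is non-negative on $(0,\infty)$. The latter can be shown either by differentiating once more and reducing to a tractable rational inequality, or by comparing the Taylor expansion of $\ln(1+\delta)$ term-by-term with that of $\delta(4+\delta)/(2+\delta)^2$. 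Chaining the three steps — Markov, MGF bound with optimized $t$, and the elementary inequality — delivers the claimed tail bound.
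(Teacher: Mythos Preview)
Your argument is correct and is exactly the standard derivation of this form of the Chernoff bound: Markov on $e^{tX}$, the MGF estimate via $1+x\le e^x$, the optimal choice $t^\star=\ln(1+\delta)$, and then the elementary inequality $(1+\delta)\ln(1+\delta)-\delta\ge \delta^2/(2+\delta)$. For the last step your outline is fine; concretely, with $h(\delta)=(1+\delta)\ln(1+\delta)-\delta-\delta^2/(2+\delta)$ one has $h(0)=h'(0)=0$ and $h''(\delta)=\frac{1}{1+\delta}-\frac{8}{(2+\delta)^3}\ge 0$ for $\delta\ge 0$ since $(2+\delta)^3-8(1+\delta)=\delta(\delta^2+6\delta+4)\ge 0$, which immediately gives $h\ge 0$.

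There is nothing to compare against in the paper: the theorem is quoted there as a standard tool and is not proved. Your write-up would serve perfectly well as the missing justification.
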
  	  
   
   By Chernoff bound, we have $$\Pr[X_e>\frac{\mu_e}{4}+\frac{\delta}{4}]\le  e^{-\frac{(1+\frac{2\delta}{\mu_e})^2}{2+(1+\frac{2\delta}{\mu_e})}\frac{\mu}{8}}\le e^{-\frac{\mu_e+2\delta}{8}}\le e^{-\frac{\delta}{4}}=\frac{1}{m^2}.$$ 
   Then by union bound,
   $$\Pr[\sum_{e\in S_j}X_e^{p_j}\le \sum_{e\in S_j}(\frac{\mu_e}{4}+\frac{\delta}{4})^{p_j}]\ge 1-\frac{1}{m}.$$ Note that
   \begin{align*}
   	\sum_{e\in S_j}X_e^{p_j}\le \sum_{e\in S_j}(\frac{\mu_e}{4}+\frac{\delta}{4})^{p_j}\le \sum_{e\in S_j}2^{p_j}(\frac{\mu_e^{p_j}}{4^{p_j}}+\frac{\delta^{p_j}}{4^{p_j}})=\frac{1}{2^{p_j}}(\|\mu(S_j)\|_{p_j}^{p_j}+|S_j|\delta^{p_j})<c_j^{p_j},
   \end{align*}
   where the last inequality is by constraint~\eqref{eq:MRP} and the assumption $c_j=\Omega(\log m)\cdot|S_j|^{\frac{1}{{p_j}}}$. Therefore, constraint~\eqref{eq:z} is satisfied with probability at least $1-\frac1m$. All the other constraints are guaranteed to be satisfied and the expected objective function value is $\frac{1}{8}\sum_{i}\sum_{P\in \mathcal{P}_i} f_{i,P}$, which is $O(\log m)$-competitive to the offline optimal. 
   \bibliographystyle{alpha}
   \bibliography{on-qnorm}
\end{document}